\newtheorem{claim}{Claim}[section]
\newtheorem{theorem}[claim]{Theorem}
\newtheorem{remark}[claim]{Remark}
\begin{document}
\begin{center}
{\Large{\textbf{Spectral estimates for Dirichlet Laplacians \\ [.3em]  on perturbed twisted tubes}}}

\bigskip

{\large{Pavel Exner and Diana Barseghyan}\footnote{The research was partially supported by the Czech Science Foundation within the project P203/11/0701 and RVO61389005. }}

\end{center}

\bigskip

\textbf{Abstract.} We investigate Dirichlet Laplacian in a straight twisted tube of a non-circular cross section, in particular, its discrete spectrum coming from a local slowdown of the twist. We prove a Lieb-Thirring-type estimate for the spectral moments and present two examples illustrating how the bound depends on the tube cross section.

\bigskip

\textbf{Mathematical Subject Classification (2010).} 81Q37, 35P15, 81Q10.

\bigskip

\textbf{Keywords.} Dirichlet Laplacian, twisted tube, discrete spectrum, eigenvalue estimates

\section{Introduction} \label{s: intro}
\setcounter{equation}{0}

Relations between geometry of a region and spectral properties of operators describing the corresponding dynamics are a trademark topic of mathematical physics. A lot of work was done in the last quarter a century about Dirichlet Laplacians in infinitely extended tubular regions. A particularly interesting class of problems concerns properties of twisted tubes.

The fact that twisting of a non-circular tube gives rise to an effective repulsive interaction was noted for the first time in \cite{CB96}, a proper mathematical meaning to this fact was given through an appropriate Hardy-type inequality \cite{EKK08}. On the other hand, while a periodic twist of an infinite straight tube raises the threshold of the essential spectrum, it is natural to expect that its \emph{local slowdown} will act as an attraction and could thus give rise to a discrete spectrum of the corresponding Dirichlet Laplacian; in \cite{EK05} it was demonstrated that it is indeed the case. The effect has been further investigated, in particular, the paper \cite{BKRS09} analyzed the asymptotic distribution of eigenvalues in case when the `slowdown' perturbation is infinitely extended and has prescribed decay properties.

In this paper we address another aspect of this spectral problem. We suppose for simplicity that the perturbation is compactly supported so that the discrete spectrum is finite, and we ask about bounds to eigenvalue moments. Using the dimension-reduction technique of Laptev and Weidl \cite{LW00} we derive Lieb-Thirring-type inequalities for moments of order $\sigma>1/2$; for $\sigma\ge 3/2$ these inequalities contain the optimal semiclassical constant. The role of the potential is at that played by the negative spectrum of an auxiliary operator acting on the cross section of the tube the coefficients of which carry the geometric information.

We accompany the general result by a pair of examples illustrating the dependence of the obtained bound on the tube cross section. Not surprisingly the estimate becomes trivial as the cross section turns to a circle; using an elliptical disc we show that the bound behaves like $\mathcal{O}\left(\varepsilon^{\sigma+1/2}\right)$ for small values of the eccentricity $\varepsilon$. The second example is maybe less self-evident: we show that a sufficiently `wild' cross section can make the bound arbitrarily large even if both the cross section area and its diameter remain bounded.

\section{The Lieb-Thirring type estimate} \label{s: LTest}
\setcounter{equation}{0}

In this section we first formulate the problem and state the main result, then we prove it and comment on the spectrum on an auxiliary operator which appears in the derived inequality.

\subsection{The main result} \label{ss: main}

Let $\omega$ be an open set, bounded and connected in $\mathbb{R}^2$, and let $\theta:\, \mathbb{R} \to \mathbb{R}$ be a
differentiable function. For a given $s\in\mathbb{R}$ and $t:=(t_2,t_3)\in\omega$ we define the mapping $\mathfrak{L}:\: \mathbb{R}\times\omega\to\mathbb{R}^3$ by
\begin{equation}
\label{eq1}
\mathfrak{L}(s,t)=(s,t_2\cos\theta(s)+t_3\sin\theta(s),t_3\cos\theta(s)-t_2\sin\theta(s))\,,
\end{equation}
which has an obvious meaning of rotating the coordinate frame in the normal plane to the $x$-axis at the point $s$ by the angle $\theta(s)$. The image $\mathfrak{L}(\mathbb{R}\times\omega)$ is a region in $\mathbb{R}^3$ which we call a \emph{twisted tube}
if the following two conditions are satisfied:

\begin{enumerate}[(i)]
 \setlength{\itemsep}{-3pt}

\item the function $\theta$ is not constant,

\item the set $\omega$ is not rotationally symmetric with respect to the origin in $\mathbb{R}^2$;

\end{enumerate}

\noindent in particular, the tube is set to be \emph{periodically twisted} if $\theta$ is a linear function.

As we have indicated we are interested in tubes with a local perturbation of the periodic twisting. Consequently, we shall consider angular function $\theta$ such that
\begin{equation}
\label{eq2}
\dot{\theta}(s)=\beta(s)= \beta_0-\mu(s)\,,
\end{equation}
where $\beta_0$ is a positive constant\footnote{The positivity assumption is made just for the sake of simplicity since for $\beta_0<0$ one can pass to a unitarily equivalent operator using mirror-image transformation $(s,t)\mapsto (-s,t)$.} and $\mu(\cdot)$ is a positive and bounded function with $\mathrm{supp}\,\mu\subset[-s_0,s_0]$ for some $s_0>0$. We will concerned with tubes
\begin{equation}
\label{Tube}
\Omega_\beta:=\mathfrak{L}(\mathbb{R}\times\omega)
\end{equation}
defined by $\mathfrak{L}= \mathfrak{L}_\theta$ corresponding to $\theta(s)= \int_{-s_0}^s \beta(s)\, \mathrm{d}s$, the angle being determined up to a constant. The object of our study is the Dirichlet Laplacian $-\Delta_D^{\Omega_\beta}$ on $L^2\big(\Omega_\beta\big)$ associated with the quadratic form
$$
Q_\beta[\psi]=\int_{\Omega_\beta}|\nabla\psi(x)|^2\, \mathrm{d}x\,, \qquad \psi\in D(Q_\beta)=\mathcal{H}_0^1
\big(\Omega_\beta\big)\,.
$$
It is well know \cite{EK05, EKK08} that analysis of $-\Delta_D^{\Omega_\beta}$ can be rephrased as investigation of the unitarily equivalent operator $H_\beta$ on the `straightened tube', i.e. the cylindrical region $\mathbb{R}\times\omega$, which is associated with form
\begin{equation}
\label{straightform}
Q^0_\beta[\psi]=\int_{\mathbb{R}\times\omega} \left[ |(\nabla'\psi)(s,t)|^2 +\big|(\partial_s\psi+\dot{\theta}\partial_\tau\psi\big)(s,t)|^2 \right]\,\mathrm{d}s\,\mathrm{d}t\,,
\end{equation}
where
$$
\nabla':=\left(\frac{\partial}{\partial t_2},\,\frac{\partial}{\partial t_3}\right)\,,\qquad \partial_s:= \frac{\partial}{\partial s}\,, \quad\partial_\tau:=t_2\frac{\partial}{\partial t_3}-t_3\frac{\partial}{\partial t_2}\,,
$$
the latter being nothing else, up to an imaginary unit, than the angular momentum operator in the normal plane to tube axis.
In other words, the operator $H_\beta$ acts on its domain in $L^2\left(\mathbb{R}\times\omega\right)$ as
$$
H_\beta=-\frac{\partial^2}{\partial t_2^2}-\frac{\partial^2}{\partial t_3^2}+\left(-i\,\frac{\partial}
{\partial s}-i\,\dot{\theta}(s)\left(t_2\frac{\partial}{\partial t_3}-t_3\frac{\partial}{\partial t_2}
\right)\right)^2
$$
and the geometry of the original problem is translated into the coefficients of the `straightened' one. We are going to compare $H_\beta$ with the `unperturbed' operator $H_{\beta_0}$. The formulate the result, we need the following operator on $L^2(\omega)$,
$$
h_{\beta_0}=-\Delta_D^\omega-\beta_0^2\left(t_2\frac{\partial}{\partial t_3}-t_3\frac{\partial}
{\partial t_2}\right)^2 \quad \mbox{with} \quad  D(h_{\beta_0}) = \mathcal{H}^2(\omega) \cap \mathcal{H}^1_0(\omega)\,.
$$
It is convenient to introduce the polar coordinates $(r,\alpha)$ in the normal plane in which the action of $h_{\beta_0}$ can be written as
\begin{equation}
\label{eq5}
h_{\beta_0}=-\Delta_D^\omega-\beta_0^2\partial_\alpha^2\,,
\end{equation}
where the standard polar-coordinate expression can used for the Dirichlet Laplacian $-\Delta_D^\omega$. As the latter has a purely discrete spectrum and the second term is positive, the spectrum of $h_{\beta_0}$ is purely discrete as well. We denote
\begin{equation}
\label{eq6}
E:=\inf\sigma(h_{\beta_0})\,,
\end{equation}
further we consider the radius of $\omega$,
$$
d:=\sup_{(t_2,t_3)\in\omega}\sqrt{t_2^2+t_3^2}
$$
and the quantity
\begin{equation}
\label{gamma}
\gamma_{\beta_0}:=\min\left\{\frac{1}{3},\frac{1}{48\beta_0^2 d^2}\right\}.
\end{equation}

\medskip

\noindent Now we are in position to state our main result:

\begin{theorem} \label{th: main}
Let $H_\beta$ be the operator associated with form (\ref{straightform}) with the twisting velocity $\dot{\theta}$ given by  (\ref{eq2}), and assume that the compactly supported function $\mu(\cdot)$ is bounded with the derivative $\dot\mu \in L^1$ and satisfies the condition $\|\mu\|_\infty<c\,\beta_0$ with some $c\in(0,\frac13 \gamma_{\beta_0})$. Then the following Lieb-Thirring-type inequality,
\begin{equation}
\label{Inequality}
\mathrm{tr}\left(H_\beta-E\right)_-^\sigma\le\alpha_{\mu,\beta_0}^{2\sigma}\,L_{\sigma,1}^{\emph{cl}}
\int_{\mathbb{R}}\mathrm{tr}\,H(s)_-^{\sigma+1/2}\,\mathrm{d}s\,,\quad\sigma\ge3/2\,,
\end{equation}
is valid for the discrete spectrum of $H_\beta$, where
\begin{equation}
\label{alpha}
\alpha_{\mu,\beta_0}^2:=\gamma_{\beta_0}-3c\,,
\end{equation}
and furthermore, $H(s)_-$ is the negative part of the operator
\begin{equation}
\label{operator}
-\Delta_D^\omega-\beta_0^2\left(t_2\frac{\partial}{\partial t_3}-t_3\frac{\partial}{\partial t_2}
\right)^2-\frac{1}{\alpha_{\mu,\beta_0}^2}\left(\dot{\mu}\, \frac{\partial_\alpha f}{f}\,-\mu(2\beta_0-\mu) \frac{\partial_\alpha^2f}{f}\right)-E
\end{equation}
with the domain $\mathcal{H}^2(\omega) \cap \mathcal{H}^1_0(\omega)$, where $-\Delta_D^\omega$ is the Dirichlet Laplacian, $f$ is the ground-state eigenfunction of $h_{\beta_0}$, and
 \begin{equation} \label{LTconstant}
L_{\sigma,1}^{\mathrm{cl}} := \frac{\Gamma(\sigma+1)}{\sqrt{4\pi}\, \Gamma(\sigma+\frac32)}\,,
 \end{equation}
is the standard semiclassical constant.
\end{theorem}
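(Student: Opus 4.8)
\medskip
\noindent\textbf{Proof plan.}
The plan is to combine a ground-state representation of the form $Q^0_\beta$ with the Laptev--Weidl dimension-reduction argument. As preliminaries, note that $\mu$ being compactly supported makes the perturbation relatively compact, so $\sigma_{\mathrm{ess}}(H_\beta)=\sigma_{\mathrm{ess}}(H_{\beta_0})=[E,\infty)$, the discrete spectrum of $H_\beta$ below $E$ is finite, and the left-hand side of (\ref{Inequality}) is well defined; moreover the ground state $f$ of $h_{\beta_0}$ is positive in $\omega$, lies in $\mathcal H^2(\omega)\cap\mathcal H^1_0(\omega)$, and obeys $h_{\beta_0}f=Ef$.

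First I would perform the ground-state substitution $\psi=f\phi$. Writing $\dot\theta=\beta_0-\mu$, using $\partial_sf=0$, integrating by parts in $t$ (boundary contributions over $\partial\omega$ vanish since $f|_{\partial\omega}=0$) and in $s$ (contributions at $\pm\infty$ vanish on the form domain), and invoking $h_{\beta_0}f=Ef$ together with $\beta_0^2-\dot\theta^2=\mu(2\beta_0-\mu)$ and $\ddot\theta=-\dot\mu$, one arrives at
\begin{equation*}
Q^0_\beta[f\phi]-E\|f\phi\|^2=\int_{\mathbb R\times\omega}f^2\Big(|\nabla'\phi|^2+\big|\partial_s\phi+(\beta_0-\mu)\partial_\tau\phi\big|^2\Big)\,\mathrm ds\,\mathrm dt+\int_{\mathbb R\times\omega}W(s,t)\,f^2|\phi|^2\,\mathrm ds\,\mathrm dt,
\end{equation*}
where $W$ is exactly the transverse potential $f^{-1}\big(\dot\mu\,\partial_\alpha f-\mu(2\beta_0-\mu)\partial_\alpha^2 f\big)$ entering (\ref{operator}); the hypothesis $\dot\mu\in L^1$ is precisely what makes the $\ddot\theta$-term integrable. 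Note also that the fibrewise energy form of $h_{\beta_0}-E$ on $f\phi(s,\cdot)$ equals $\int_\omega f^2\big(|\nabla'\phi|^2+\beta_0^2|\partial_\tau\phi|^2\big)\,\mathrm dt$.

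The central step is the operator inequality
\begin{equation*}
H_\beta-E\;\ge\;\alpha_{\mu,\beta_0}^2\Big(-\tfrac{\mathrm d^2}{\mathrm ds^2}\otimes\mathbf 1_{L^2(\omega)}+H(\cdot)\Big),
\end{equation*}
understood in the form sense on $L^2(\mathbb R,L^2(\omega))$ with $H(s)$ acting fibrewise as in (\ref{operator}). After the substitution above --- and since, by the very definition of $H(s)$, the $W$-terms on the two sides cancel --- this reduces to the pointwise-in-$(s,t)$ estimate
\begin{equation*}
|\nabla'\phi|^2+\big|\partial_s\phi+(\beta_0-\mu)\partial_\tau\phi\big|^2\;\ge\;\alpha_{\mu,\beta_0}^2\big(|\partial_s\phi|^2+|\nabla'\phi|^2+\beta_0^2|\partial_\tau\phi|^2\big)
\end{equation*}
for the quadratic form in $(\partial_s\phi,\partial_\tau\phi)\in\mathbb C^2$. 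Using $|\nabla'\phi|^2\ge r^{-2}|\partial_\tau\phi|^2\ge d^{-2}|\partial_\tau\phi|^2$ and expanding the cross term $2(\beta_0-\mu)\,\mathrm{Re}(\overline{\partial_s\phi}\,\partial_\tau\phi)$, the claim turns into the non-negativity of a $2\times2$ matrix with entries built from $\beta_0-\mu$, $d$ and $\alpha_{\mu,\beta_0}^2$; since $\alpha_{\mu,\beta_0}^2\le\gamma_{\beta_0}\le\tfrac13$ and $0<\beta_0-\mu\le\beta_0$ with $\|\mu\|_\infty<c\beta_0$, the determinant condition holds exactly when $\alpha_{\mu,\beta_0}^2$ stays below the threshold recorded in (\ref{gamma})--(\ref{alpha}), the subtracted $3c$ accounting for the errors incurred in replacing $\beta_0-\mu$ by $\beta_0$. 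I expect this step --- extracting a fixed fraction of the longitudinal kinetic energy to build $-\mathrm d^2/\mathrm ds^2$ while keeping enough of the transverse (Hardy-type) term $r^{-2}|\partial_\tau\phi|^2$ both to absorb the cross term, the only genuine coupling of the two directions, and to rebuild $h_{\beta_0}$ on the fibres --- to be the main obstacle.

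The conclusion is then routine. The operator inequality and the min-max principle give $\big(\lambda_k(H_\beta-E)\big)_-\le\alpha_{\mu,\beta_0}^2\big(\lambda_k(-\mathrm d^2/\mathrm ds^2\otimes\mathbf 1+H(\cdot))\big)_-$ for every $k$, whence $\mathrm{tr}(H_\beta-E)_-^\sigma\le\alpha_{\mu,\beta_0}^{2\sigma}\,\mathrm{tr}\big(-\mathrm d^2/\mathrm ds^2\otimes\mathbf 1+H(\cdot)\big)_-^\sigma$. Finally the operator-valued Lieb--Thirring inequality of Laptev and Weidl \cite{LW00}, which for $\sigma\ge3/2$ holds with the sharp one-dimensional semiclassical constant $L_{\sigma,1}^{\mathrm{cl}}$, applied to the one-dimensional Schr\"odinger-type operator with operator-valued potential $H(\cdot)$ yields $\mathrm{tr}\big(-\mathrm d^2/\mathrm ds^2\otimes\mathbf 1+H(\cdot)\big)_-^\sigma\le L_{\sigma,1}^{\mathrm{cl}}\int_{\mathbb R}\mathrm{tr}\,H(s)_-^{\sigma+1/2}\,\mathrm ds$, which combined with the previous bound is (\ref{Inequality}). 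All form manipulations are first carried out for smooth compactly supported $\phi$ and then extended by density.
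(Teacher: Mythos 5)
Your overall architecture --- ground-state substitution $\psi=f\phi$, comparison with a one-dimensional operator with operator-valued potential, the Laptev--Weidl inequality, and min-max --- is the same as the paper's, and the final steps are fine. The genuine gap is in your central reduction. First, the claimed identity is not exact: carrying out the substitution with the square kept as $|\partial_s\phi+(\beta_0-\mu)\partial_\alpha\phi|^2$ and using $h_{\beta_0}f=Ef$, the zero-order term comes out as
\begin{equation*}
+\int_{\mathbb{R}\times\omega}\left(\dot\mu\,\frac{\partial_\alpha f}{f}+\mu(2\beta_0-\mu)\,\frac{\partial_\alpha^2 f}{f}\right)|\psi|^2\,\mathrm{d}s\,\mathrm{d}t\,,
\end{equation*}
because the coefficient $\beta_0^2-(\beta_0-\mu)^2=\mu(2\beta_0-\mu)$ multiplies $+f\partial_\alpha^2 f$; this is not the potential appearing in (\ref{operator}). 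Second, and decisively, even granting your identity the ``$W$-terms'' do not cancel: in the target inequality $H_\beta-E\ge\alpha_{\mu,\beta_0}^2\bigl(-\mathrm{d}^2/\mathrm{d}s^2\otimes\mathbf{1}+H(\cdot)\bigr)$ the potential enters the right-hand side as $-\int W|\psi|^2$ (the prefactor $\alpha_{\mu,\beta_0}^2$ cancels the $1/\alpha_{\mu,\beta_0}^2$ in (\ref{operator})), while in your identity it enters the left-hand side as $+\int W|\psi|^2$; bringing everything to one side, the two contributions add up to $2\int W|\psi|^2$, a sign-indefinite term that cannot be discarded. Hence the reduction to the pointwise inequality in $(\partial_s\phi,\partial_\tau\phi)$ --- which, incidentally, does hold for $\alpha^2\le\gamma_{\beta_0}$ --- does not prove the stated theorem.

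Related to this, your plan never actually uses the hypothesis $\|\mu\|_\infty<c\,\beta_0$ nor derives the value $\alpha_{\mu,\beta_0}^2=\gamma_{\beta_0}-3c$: the remark that a $2\times2$ determinant condition reproduces (\ref{gamma})--(\ref{alpha}) ``exactly'', with the $3c$ ``accounting for the errors incurred in replacing $\beta_0-\mu$ by $\beta_0$'', is unsupported, and in your bookkeeping $\beta_0-\mu$ is never replaced by $\beta_0$, so there are no such errors to account for. In the paper the square is completed with $\beta_0\partial_\alpha\varphi$; the leftover $\mu$-cross terms and the $\dot\mu$-term are then split by Cauchy--Schwarz, part being absorbed into a $c$-fraction of the kinetic energy (this is precisely where $\|\mu\|_\infty<c\beta_0$ enters and where the loss $3c$ in (\ref{alpha}) originates, cf.\ (\ref{eq8})), and the remainder is exactly the multiplication potential of $H(s)$; only afterwards is the unperturbed part bounded below by $\gamma_{\beta_0}$ times the diagonal form, via a two-case argument on the size of the integrated cross term rather than pointwise. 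If you keep $(\beta_0-\mu)$ inside the square and argue pointwise, you can obtain a Lieb--Thirring bound of the same type, but with constant $\gamma_{\beta_0}^{2\sigma}$ and a fibre operator whose potential is the one displayed above (rescaled by $\gamma_{\beta_0}^{-1}$) --- a variant of, not literally, the inequality (\ref{Inequality}) with the operator (\ref{operator}) as stated.
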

\begin{remark} \label{moresigma}
{\rm The validity of the result can be in analogy with \cite{ELW04} extended to any $\sigma\ge 1/2$; the price to pay is only a change in the constant, $L_{\sigma,1}^{\mathrm{cl}}$ being replaced with $r(\sigma,1) L_{\sigma,1}^{\mathrm{cl}}$ with the factor $r(\sigma,1)\le 2$ if $\sigma<3/2$.
}
\end{remark}
\begin{proof}
Using the fact that the ground-state eigenfunction $f$ of $h_{\beta_0}$ is strictly positive in $\omega$ --- cf.~\cite{EK05} --- we can decompose any $\psi\in C_0^\infty(\omega)$ as
$$
\psi(s,t)=f(t)\varphi(s,t)\,.
$$
We substitute this into (\ref{straightform}) and integrate twice by parts obtaining
\begin{eqnarray*}
\lefteqn{Q_{\beta}^0[\psi]-E\|\psi\|^2=\int_{\mathbb{R}\times\omega}\Big(f^2|\nabla'\varphi|^2
-(\Delta_D^\omega f)f|\varphi|^2+f^2|\partial_s\varphi|^2} \\[.5em] && +(\beta_0-\mu)f\partial_\alpha f(\partial_s\overline{\varphi}\,
\varphi+\overline{\varphi}\,\partial_s\varphi)
+(\beta_0-\mu)f^2(\partial_s\overline{\varphi}\,\partial_\alpha\varphi+\partial_s\varphi\,\partial_\alpha
\overline{\varphi}) \\[.5em] &&
+(\beta_0-\mu)^2f^2|\partial_\alpha\varphi|^2
-(\beta_0-\mu)^2(\partial_\alpha^2f)
f|\varphi|^2-E\,f^2|\varphi|^2\Big)\,\mathrm{d}s\,\mathrm{d}t\,.
\end{eqnarray*}
Since
$$
\int_{\mathbb{R}}(\partial_s\overline{\varphi}\,\varphi+\overline{\varphi}\,\partial_s\varphi)\, \mathrm{d}s=0\,, \quad
\int_{\mathbb{R}}\mu(\partial_s\overline{\varphi}\,\varphi+\overline{\varphi}\,\partial_s\varphi)\,
\mathrm{d}s=-\int_{\mathbb{R}}\dot{\mu}\,|\varphi|^2\,\mathrm{d}s\,,
$$
where the last integral makes sense in view of boundedness of the function  $\varphi$, and $-\Delta_D^\omega f-\beta_0^2\partial_\alpha^2f-E\,f=0$ holds by assumption, we get
\begin{eqnarray}
\lefteqn{Q_\beta^0[\psi]-E\|\psi\|^2=\int_{\mathbb{R}\times\omega}f^2\big(|\nabla'\varphi|^2+|\partial_s\varphi
+\beta_0\partial_\alpha\varphi|^2\big) \,\mathrm{d}s\,\mathrm{d}t} \nonumber
\\[.3em] &&
\;\;-\int_{\mathbb{R}\times\omega}\biggl(\mu\,f\partial_\alpha f(\partial_s\overline{\varphi}\,\varphi+\overline{\varphi}\,
\partial_s\varphi)+\mu\,f^2(\partial_s\overline{\varphi}\,\partial_\alpha\varphi+\partial_\alpha
\overline{\varphi}\,\partial_s\varphi) \nonumber
\\[.3em] &&
\;\;-(\mu^2-2\beta_0\mu)f^2|\partial_\alpha\varphi|^2
+(\mu^2-2\beta_0\mu)(\partial_\alpha^2f)f|\varphi|^2\biggr)\,\mathrm{d}s\,\mathrm{d}t \nonumber
\\[.3em] &&
\label{eq7}
=\int_{\mathbb{R}\times\omega}
f^2\left(|\nabla'\varphi|^2+|\partial_s\varphi+\beta_0\partial_\alpha\varphi|^2\right)\,\mathrm{d}s\,\mathrm{d}t-I_\beta\,,
\end{eqnarray}
where
\begin{eqnarray*}
\lefteqn{I_\beta:=\int_{\mathbb{R}\times\omega}\biggl(\dot{\mu}\,f\partial_\alpha f |\varphi|^2+\mu\,f^2(\partial_s
\overline{\varphi}\,\partial_\alpha\varphi+\partial_\alpha\overline{\varphi}\,\partial_s\varphi)} \\[.3em] &&
-\mu(\mu-2\beta_0)\,f^2|\partial_\alpha\varphi|^2
+\mu(\mu-2\beta_0)(\partial_\alpha^2f)f|\varphi|^2\biggr)\,\mathrm{d}s\,\mathrm{d}t\,;
\end{eqnarray*}
the integral again makes sense since $\dot\mu \in L^1$ by assumption and the other involved functions are bounded.
Let us estimate it. Using the Cauchy inequality one gets
\begin{eqnarray}
\lefteqn{\nonumber
I_\beta\le\int_{\mathbb{R}\times\omega}\dot{\mu}\,f(\partial_\alpha f)|\varphi|^2\,\mathrm{d}s\,\mathrm{d}t}
\\[.3em] &&
\nonumber
\;\;+2\left(\int_{\mathbb{R}\times\omega}\mu^2\,f^2|\partial_\alpha\varphi|^2\,\mathrm{d}s\,\mathrm{d}t\right)^{1/2}
\left(\int_{\mathbb{R}\times\omega}f^2|\partial_s\varphi|^2\,\mathrm{d}s\,\mathrm{d}t\right)^{1/2}
\\[.3em] &&
\nonumber
\;\;+2\beta_0\int_{\mathbb{R}\times\omega}\mu\,f^2|\partial_\alpha\varphi|^2\,\mathrm{d}s\,\mathrm{d}t
+\int_{\mathbb{R}\times\omega}(\mu^2-2\beta_0\mu)\,\frac{\partial_\alpha^2f}{f}|\varphi|^2\,\mathrm{d}s\,\mathrm{d}t
\\[.3em] &&
\nonumber
\le\int_{\mathbb{R}\times\omega}\dot{\mu}\,\frac{\partial_\alpha f}{f}|\psi|^2\,\mathrm{d}s\,\mathrm{d}t
+\int_{\mathbb{R}\times\omega}(\mu^2-2\beta_0\mu)\,\frac{\partial^2_\alpha f}{f}|\psi|^2\,\mathrm{d}s\,\mathrm{d}t
\\[.3em] &&
\nonumber
\;\;+c\int_{\mathbb{R}\times\omega}f^2|\partial_s\varphi|^2\,\mathrm{d}s\,\mathrm{d}t
+\int_{\mathbb{R}\times\omega} \big(c^{-1} + 2\beta_0\mu \big) \mu^2\,f^2
|\partial_\alpha\varphi|^2\,\mathrm{d}s\,\mathrm{d}t
\\[.3em] &&
\label{eq8}
\le\int_{\mathbb{R}\times\omega}\dot{\mu}\,\frac{\partial_\alpha f}{f}|\psi|^2\,\mathrm{d}s\,\mathrm{d}t
+\int_{\mathbb{R}\times\omega}(\mu^2-2\beta_0\mu)\,\frac{\partial^2_\alpha f}{f}|\psi|^2\,\mathrm{d}s\,\mathrm{d}t
\\[.3em] &&
\nonumber
\;\;+c\int_{\mathbb{R}\times\omega}f^2|\partial_s\varphi|^2\,\mathrm{d}s\,\mathrm{d}t
+\left( \frac{\|\mu\|_\infty^2}{c\beta_0^2} + \frac{2\|\mu\|_\infty}{\beta_0} \right)
\int_{\mathbb{R}\times\omega}\beta_0^2\,f^2|\partial_\alpha\varphi|^2\,\mathrm{d}s\,\mathrm{d}t
\end{eqnarray}
for an arbitrary $c>0$. Let us now return to the quadratic form expression (\ref{eq7}). If
$$
2\beta_0\int_{\mathbb{R}\times\omega}f^2|\partial_s\varphi||\partial_\alpha\varphi|\,\mathrm{d}s\,\mathrm{d}t\le\frac{1}{2}
\int_{\mathbb{R}\times\omega}f^2|\partial_s\varphi|^2\,\mathrm{d}s\,\mathrm{d}t
$$
then the said formula yields the estimate
\begin{equation}
\label{eq9}
Q_\beta^0[\psi]-E\|\psi\|^2\ge\frac{1}{2}\int_{\mathbb{R}\times\omega}f^2\left(|\nabla'\varphi|^2
+|\partial_s\varphi|^2+\beta_0^2|\partial_\alpha\varphi|^2\right)\,\mathrm{d}s\,\mathrm{d}t-I_\beta\,.
\end{equation}
In the opposite case, namely under the assumption
$$
2\beta_0\int_{\mathbb{R}\times\omega}f^2|\partial_s\varphi||\partial_\alpha\varphi|\,\mathrm{d}s\,\mathrm{d}t>\frac{1}{2}
\int_{\mathbb{R}\times\omega}f^2|\partial_s\varphi|^2\,\mathrm{d}s\,\mathrm{d}t\,,
$$
one has to employ the obvious inequality $|\partial_\alpha\varphi|\le d\,|\nabla'\varphi|$ to obtain
$$
\int_{\mathbb{R}\times\omega}f^2|\partial_s\varphi|^2\,\mathrm{d}s\,\mathrm{d}t<16\beta_0^2d^2\int_{\mathbb{R}\times\omega}f^2
|\nabla'\varphi|^2\,\mathrm{d}s\,\mathrm{d}t\,.
$$
Combining these estimates with the definition (\ref{gamma}), we infer from (\ref{eq7}) that
\begin{equation}
\label{LTPerturb}
Q_\beta^0[\psi]-E\|\psi\|^2\ge\gamma_{\beta_0}\int_{\mathbb{R}\times\omega}f^2\left(|\nabla'\varphi|^2
+|\partial_s\varphi|^2+\beta_0^2|\partial_\alpha\varphi|^2\right)\,\mathrm{d}s\,\mathrm{d}t-I_\beta\,.
\end{equation}
Furthermore, in view of inequality (\ref{eq8}) it follows from (\ref{LTPerturb}) that
\begin{eqnarray*}
\lefteqn{\nonumber
Q_\beta^0[\psi]-E\|\psi\|^2\ge\alpha_{\mu,\beta_0}^2\int_{\mathbb{R}\times\omega}f^2\left(|\nabla'\varphi|^2+
|\partial_s\varphi|^2+\beta_0^2|\partial_\alpha\varphi|^2\right)\,\mathrm{d}s\,\mathrm{d}t}
\\[.3em] &&
-\int_{\mathbb{R}\times\omega}\left(\dot{\mu}\,\frac{\partial_\alpha f}{f}-\mu\,(2\beta_0-\mu)\, \frac{\partial^2_\alpha f}{f}\right)|\psi|^2\,\mathrm{d}s\,\mathrm{d}t \phantom{AAAAAA}
\end{eqnarray*}
with the number $\alpha_{\mu,\beta_0}$ given in (\ref{alpha}). In the next step we estimate the integrals $\int_{\mathbb{R}\times\omega}f^2|\nabla'\varphi|^2\,\mathrm{d}s\,\mathrm{d}t$ and $\int_{\mathbb{R}\times\omega}f^2|\partial_\alpha\varphi|^2\,\mathrm{d}s\, \mathrm{d}t$. An integration by parts yields
\begin{eqnarray*}
\lefteqn{\nonumber
\int_{\mathbb{R}\times\omega}f^2\left|\frac{\partial\varphi}{\partial t_j}\right|^2\,\mathrm{d}s\,
\mathrm{d}t=\int_{\mathbb{R}\times\omega}\left|\frac{\partial\psi}{\partial t_j}-\varphi\frac{\partial f}
{\partial t_j}\right|^2\,\mathrm{d}s\,\mathrm{d}t=
\int_{\mathbb{R}\times\omega}\left|\frac{\partial\psi}{\partial t_j}\right|^2\,\mathrm{d}s\,\mathrm{d}t}
\\[.3em] && 
+\int_{\mathbb{R}\times\omega}\frac{1}{f^2}\left(\frac{\partial f}{\partial t_j}\right)^2|\psi|^2\,
\mathrm{d}s\,\mathrm{d}t+\int_{\mathbb{R}\times\omega}\frac{\partial}{\partial t_j}\left(\frac{1}{f}
\frac{\partial f}{\partial t_j}\right)|\psi|^2\,\mathrm{d}s\,\mathrm{d}t \phantom{AAAA}
\end{eqnarray*}
for $j=2,3$. In a similar way, changing integration variables one finds
\begin{eqnarray*}
\lefteqn{\nonumber
\int_{\mathbb{R}\times\omega}f^2|\partial_\alpha\varphi|^2\,\mathrm{d}s\,\mathrm{d}t=\int_{\mathbb{R}\times\omega}
\left|\partial_\alpha\psi-\varphi\partial_\alpha f\right|^2\,\mathrm{d}s\,\mathrm{d}t
=\int_{\mathbb{R}\times\omega}|\partial_\alpha\psi|^2\,\mathrm{d}s\,\mathrm{d}t}
\\[.3em] &&
+\int_{\mathbb{R}\times\omega}\frac{(\partial_\alpha f)^2}{f^2}|\psi|^2\,\mathrm{d}s\,\mathrm{d}t+\int_{\mathbb{R}\times\omega}\partial_\alpha\left(\frac{\partial_\alpha f}{f} \right) |\psi|^2\,\mathrm{d}s\,\mathrm{d}t\,, \phantom{AAA}
\end{eqnarray*}
hence inequality (\ref{LTPerturb}) implies
\begin{eqnarray*}
\lefteqn{\nonumber
Q_\beta^0[\psi]-E\|\psi\|^2\ge\alpha_{\mu,\beta_0}^2\int_{\mathbb{R}\times\omega}\left(|\nabla'\psi|^2+
|\partial_s\psi|^2+\beta_0^2|\partial_\alpha\psi|^2\right)\,\mathrm{d}s\,\mathrm{d}t}
\\[.3em] &&
\nonumber
+\alpha_{\mu,\beta_0}^2\int_{\mathbb{R}\times\omega}\biggl(\left|\nabla'f\right|^2/f^2+\frac{\partial}{\partial t_2}
\left(\frac{1}{f}\frac{\partial f}{\partial t_2}\right)+\frac{\partial}{\partial t_3}
\left(\frac{1}{f}\frac{\partial f}{\partial t_3}\right)
\\[.3em] &&
\nonumber
\quad+\beta_0^2\frac{(\partial_\alpha f)^2}{f^2}+\beta_0^2\partial_\alpha\left(\frac{\partial_\alpha f}{f}\right)\biggr)|\psi|^2\,\mathrm{d}s\,\mathrm{d}t
\\[.3em] &&
-\int_{\mathbb{R}\times\omega}\left(\dot{\mu}\,\frac{\partial_\alpha f}{f}-\mu(2\beta_0-\mu)\,\frac{\partial^2_\alpha f}{f}\right)|\psi|^2\,\mathrm{d}s\,\mathrm{d}t\,.
\end{eqnarray*}
After a differentiation we get from here
\begin{eqnarray*}
\lefteqn{Q_\beta^0[\psi]-E\|\psi\|^2} \\[.3em] &&
\ge\alpha_{\mu,\beta_0}^2\int_{\mathbb{R}\times\omega}\left(|\nabla'\psi|^2+
|\partial_s\psi|^2+\beta_0^2|\partial_\alpha\psi|^2+
\frac{\Delta_D^\omega f}{f}+\beta_0^2\frac{\partial_\alpha^2f}{f}\right)\,\mathrm{d}s\,\mathrm{d}t \\[.3em] &&
-\int_{\mathbb{R}\times\omega}\left(\dot{\mu}\,\frac{\partial_\alpha f}{f}-\mu(2\beta_0-\mu)\,\frac{\partial_\alpha^2f}{f}\right)|\psi|^2\,\mathrm{d}s\,\mathrm{d}t\,,
\end{eqnarray*}
so in view of the fact that $f$ is the ground-state eigenfunction of operator $h_{\beta_0}$ we are able to conclude that
\begin{eqnarray}
\lefteqn{\nonumber
Q_\beta^0[\psi]-E\|\psi\|^2\ge\alpha_{\mu,\beta_0}^2\int_{\mathbb{R}\times\omega}\left(|\nabla'\psi|^2+
|\partial_s\psi|^2+\beta_0^2|\partial_\alpha\psi|^2-E|\psi|^2\right)\,\mathrm{d}s\,\mathrm{d}t}
\\[.3em] &&
\label{perturbation}
-\int_{\mathbb{R}\times\omega}\left(\dot{\mu}\,\frac{\partial_\alpha f}{f}-\mu(2\beta_0-\mu)\,\frac{\partial_\alpha^2f}{f}\right)|\psi|^2\,\mathrm{d}s\,\mathrm{d}t\,. \phantom{AAAAAAAAAAAA}
\end{eqnarray}
Up to the numerical factor the quadratic form on the left-hand side corresponds to the self-adjoint operator $H_0^-$ on $L^2(\mathbb{R}\times\omega)$ acting as
\begin{eqnarray*}
H_0^-:=-\Delta_D^{\mathbb{R}\times\omega}-\beta_0^2\left(t_2\frac{\partial}
{\partial t_3}-t_3\frac{\partial}{\partial t_2}\right)^2
-\frac{1}{\alpha_{\mu,\beta_0}^2}\left(\dot{\mu}\,\frac{\partial_\alpha f}{f}-\mu(2\beta_0-\mu)\,\frac{\partial_\alpha^2f}{f}\right)-E\,;
\end{eqnarray*}
we are going to prove that its negative spectrum is discrete and establish a bound to it. We shall employ (\ref{perturbation}) in combination with the minimax principle.

Choosing for $\psi$ a function $u\in\,C_0^\infty(\mathbb{R}\times\omega)$ we can estimate the right-hand side of (\ref{perturbation}) from below as follows
\begin{eqnarray*}
\lefteqn{\int_{\mathbb{R}\times\omega}\left(|\partial_s u|^2+|\nabla'u|^2+\beta_0^2|\partial_\alpha u|^2-E|u|^2\right)\,
\mathrm{d}s\,\mathrm{d}t}
\\[.3em] &&
\quad -\frac{1}{\alpha_{\mu,\beta_0}^2}\int_{\Omega_0}\left(\dot{\mu}\,\frac{\partial_\alpha f}{f}-\mu(2\beta_0-\mu)\,\frac{\partial_\alpha^2f}{f}\right)|u|^2\,\mathrm{d}s\,\mathrm{d}t
\\[.3em] &&
\ge\int_{\mathbb{R}\times\omega}|\partial_su|^2\,\mathrm{d}s\,\mathrm{d}t
+\int_{\mathbb{R}}\left\langle\,H(s)\,u(s,\cdot),\,u(s,\cdot)\right\rangle_{L^2(\omega)}\,
\mathrm{d}s\,,
\end{eqnarray*}
where $H(s)$ is the negative part of Schr\"{o}dinger operator (\ref{operator}).Consider next functions of the form $g=u+v$, where $u$ as above and $v\in\,C_0^\infty(\widehat{\mathbb{R}\times\omega})$ supported by the complement $\widehat{\mathbb{R} \times\omega}:= \mathbb{R}^3\backslash\overline{\mathbb{R}\times\omega}$ are both extended by zero to $\mathbb{R}^3$. We have
\begin{eqnarray}
\lefteqn{\nonumber
\|\partial_su\|^2_{L^2(\mathbb{R}\times\omega)}+\int_{\mathbb{R}\times\omega}\left(|\nabla'u|^2+\beta_0^2|\partial_\alpha u|^2-E|u|^2\right)\,\mathrm{d}s\,\mathrm{d}t
+\|\partial_s v\|_{L^2(\widehat{\mathbb{R}\times\omega})}^2}
\\[.3em] &&
\nonumber
\quad +\|\nabla'v\|^2_{L^2(\widehat{\mathbb{R}\times\omega})}
-\frac{1}{\alpha_{\mu,\beta_0}^2}\int_{\mathbb{R}\times\omega}\left(\dot{\mu}\,\frac{\partial_\alpha f}{f}-\mu(2\beta_0-\mu)\frac{\partial_\alpha^2f}{f}\right)|u|^2\,\mathrm{d}s\,\mathrm{d}t
\\[.3em] &&
\label{SP}
\ge\int_{\mathbb{R}^3}|\partial_sg|^2\,\mathrm{d}s\,\mathrm{d}t+\int_{\mathbb{R}}\left\langle\,H(s)
\,g(s,\cdot),\,g(s,\cdot)\right\rangle_{L^2(\mathbb{R}^2)}\,\mathrm{d}s\,,
\end{eqnarray}
where we have extended $H(s)$ to an orthogonal sum acting as zero on $C_0^\infty\left(\mathbb{R}^2\backslash\, \overline{\omega}\right)$ obtaining thus an operator on $C_0^\infty(\mathbb{R}^2)$. The inequality (\ref{SP}) holds true for any
function $g\in\,C_0^\infty\left(\mathbb{R}^3\backslash\partial\Omega_0\right)$ and its left-hand side is the quadratic form associated with the operator $H_0^-\oplus\left(-\Delta_D^{\widehat{\mathbb{R} \times\omega}}\right)$. On the other hand, the right-hand side of (\ref{SP}) is the form associated with the operator $-\frac{\partial^2}{\partial\,s^2}\otimes\, I_{L^2(\mathbb{R}^2)}+H(s)$ defined on the larger domain $\mathcal{H}^2\left(\mathbb{R},L^2(\mathbb{R}^2)\right)$.
Due to the positivity of the Dirichlet Laplacian $-\Delta_D^{\widehat{\mathbb{R}\times\omega}}$ the variational principle allows us to conclude
\begin{equation}
\label{eq4.9}
\mathrm{tr}\,\left(H^-_0\right)_-^\sigma\le\,\mathrm{tr}\left(-\frac{\partial^2}{\partial\,s^2}
\otimes\,I_{L^2(\mathbb{R}^2)}+H(s)\right)_-^\sigma,\quad\sigma\ge0\,.
\end{equation}
Then in turn the Lieb-Thirring inequality for operator-valued potentials \cite{LW00} implies
$$
\mathrm{tr}\,\left(H^-_0\right)_-^\sigma\le\,L_{\sigma,1}^{\emph{cl}}
\int_{\mathbb{R}}\mathrm{tr}\,H(s)_-^{\sigma+1/2}\,\mathrm{d}s \quad\text{for any}\quad\sigma\ge3/2
$$
with the semiclassical constant $L_{\sigma,1}^{\emph{cl}}$ (and for $\sigma\ge 1/2$ with a worse constant as mentioned in Remark~\ref{moresigma}). By minimax principle we thus finally get
$$
\mathrm{tr}\left(H_\beta-E\right)_-^\sigma\le\alpha_{\mu,\beta_0}^{2\sigma}\,\,L_{\sigma,1}^{\emph{cl}}
\int_{\mathbb{R}}\mathrm{tr}\,H(s)_-^{\sigma+1/2}\,\mathrm{d}s \quad\text{for}\quad\sigma\ge3/2\,,
$$
which concludes the argument.
\end{proof}
\bigskip

\subsection{The discreteness of the spectrum of operator $H(s)$} \label{ss: discr}

In order to apply Theorem~\ref{th: main} we have to make sure that the (negative part of the) spectrum of the operator $H(s)$ is discrete. It looks almost self-evident, but we present a proof, under slightly stronger assumptions:

\begin{theorem}
The spectrum of $H(s)$ is purely discrete for any fixed $s\in\mathbb{R}$ if the following condition is satisfied:
\begin{equation}
\label{sp}
\max\left\{\frac{2\|\mu\|_\infty}{\beta_0},\,
\frac{\|\dot{\mu}\|_\infty}{2\beta_0^2}\right\}<\alpha_{\mu,\beta_0}^2.
\end{equation}
\end{theorem}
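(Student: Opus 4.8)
The plan is to realize $H(s)$ as the negative part of the operator
\[
A(s):=h_{\beta_0}-E-\frac{1}{\alpha_{\mu,\beta_0}^2}\left(\dot\mu(s)\,\frac{\partial_\alpha f}{f}-\mu(s)\big(2\beta_0-\mu(s)\big)\,\frac{\partial_\alpha^2 f}{f}\right)=(h_{\beta_0}-E)+W(s)
\]
on $L^2(\omega)$, where for a fixed $s$ the numbers $\dot\mu(s),\mu(s)$ are constants and $W(s)$ is the corresponding multiplication operator. Since $\omega$ is bounded, the form domain of $h_{\beta_0}$ is $\mathcal{H}^1_0(\omega)$, which embeds compactly into $L^2(\omega)$; hence it suffices to prove that the form
\[
a(s)[u]:=\big\langle(h_{\beta_0}-E)u,u\big\rangle+\big\langle W(s)u,u\big\rangle,\qquad u\in C_0^\infty(\omega),
\]
obeys $\delta\,\|u\|_{\mathcal{H}^1_0(\omega)}^2-C\|u\|^2\le a(s)[u]\le C'\,\|u\|_{\mathcal{H}^1_0(\omega)}^2$ for some $\delta>0$. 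Then $a(s)$ is closable, the domain of its closure is $\mathcal{H}^1_0(\omega)$, and the associated self-adjoint operator $A(s)$ has compact resolvent, so that $\sigma(A(s))$ --- and \emph{a fortiori} $\sigma(H(s))$ --- is purely discrete.

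Because $f$ vanishes on $\partial\omega$, the coefficients of $W(s)$ are unbounded there, and the bound must be extracted through the ground-state substitution $u=f\varphi$ already used in the proof of Theorem~\ref{th: main}. For $u\in C_0^\infty(\omega)$ the integrations by parts carried out there give
\[
\big\langle(h_{\beta_0}-E)u,u\big\rangle=\int_\omega f^2\big(|\nabla'\varphi|^2+\beta_0^2|\partial_\alpha\varphi|^2\big)\,\mathrm{d}t,
\]
together with $\int_\omega\frac{\partial_\alpha f}{f}\,|u|^2\,\mathrm{d}t=-\,\mathrm{Re}\int_\omega f^2\,\overline{\varphi}\,\partial_\alpha\varphi\,\mathrm{d}t$ (from $f\partial_\alpha f=\tfrac12\partial_\alpha(f^2)$ and one integration by parts). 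The second coefficient I would not integrate by parts directly but eliminate using the eigenvalue equation $-\Delta_D^\omega f-\beta_0^2\partial_\alpha^2 f=Ef$, i.e. $\beta_0^2\frac{\partial_\alpha^2 f}{f}=-E-\frac{\Delta_D^\omega f}{f}$, combined with the identity $\int_\omega f^2|\nabla'\varphi|^2\,\mathrm{d}t=\int_\omega|\nabla' u|^2\,\mathrm{d}t+\int_\omega\frac{\Delta_D^\omega f}{f}\,|u|^2\,\mathrm{d}t$ also established there; this gives
\[
\beta_0^2\int_\omega\frac{\partial_\alpha^2 f}{f}\,|u|^2\,\mathrm{d}t=\int_\omega|\nabla' u|^2\,\mathrm{d}t-\int_\omega f^2|\nabla'\varphi|^2\,\mathrm{d}t-E\|u\|^2.
\]

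The effect of these two identities is that the two singular parts of $W(s)$ can be absorbed against two \emph{different} nonnegative pieces of $\langle(h_{\beta_0}-E)u,u\rangle$. Using $|\dot\mu(s)|\le\|\dot\mu\|_\infty$, the Cauchy--Schwarz inequality and Young's inequality, the $\dot\mu$-contribution is bounded in modulus by $\tfrac{\|\dot\mu\|_\infty}{2\alpha_{\mu,\beta_0}^2\beta_0^2}\cdot\beta_0^2\!\int_\omega f^2|\partial_\alpha\varphi|^2\,\mathrm{d}t+\tfrac{\|\dot\mu\|_\infty}{2\alpha_{\mu,\beta_0}^2}\|u\|^2$; using $0<\mu(s)(2\beta_0-\mu(s))\le 2\beta_0\|\mu\|_\infty$ and discarding the nonnegative term $\int_\omega|\nabla'u|^2$, the $\mu$-contribution is bounded below by $-\tfrac{2\|\mu\|_\infty}{\alpha_{\mu,\beta_0}^2\beta_0}\!\int_\omega f^2|\nabla'\varphi|^2\,\mathrm{d}t-\tfrac{2\|\mu\|_\infty E}{\alpha_{\mu,\beta_0}^2\beta_0}\|u\|^2$. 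Adding the three contributions,
\[
a(s)[u]\ge\Big(1-\tfrac{2\|\mu\|_\infty}{\alpha_{\mu,\beta_0}^2\beta_0}\Big)\int_\omega f^2|\nabla'\varphi|^2\,\mathrm{d}t+\Big(1-\tfrac{\|\dot\mu\|_\infty}{2\alpha_{\mu,\beta_0}^2\beta_0^2}\Big)\beta_0^2\!\int_\omega f^2|\partial_\alpha\varphi|^2\,\mathrm{d}t-C\|u\|^2,
\]
and both bracketed factors are strictly positive exactly under hypothesis (\ref{sp}) --- which is why it is the \emph{maximum} of the two ratios that matters, not their sum. Since the two displayed integrals sum to $\langle(h_{\beta_0}-E)u,u\rangle$, which by the Poincar\'e inequality dominates $c\|u\|_{\mathcal{H}^1_0(\omega)}^2-C''\|u\|^2$ for some $c>0$, this is the desired lower coercivity bound; the upper bound is obtained in the same fashion.

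The only genuinely delicate point in making the above rigorous is not the elementary inequalities but the passage from the core $C_0^\infty(\omega)$ to all of $\mathcal{H}^1_0(\omega)$: one must know that the ground-state transform $u\mapsto\varphi=u/f$ is an isometry of $L^2(\omega)$ onto $L^2(\omega,f^2\,\mathrm{d}t)$ under which the identities above persist on the whole form domain --- the standard ground-state representation. Granting this, the coercivity estimate, and hence the discreteness of $\sigma(H(s))$, follows.
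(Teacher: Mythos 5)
Your argument is correct and is essentially the paper's own proof: the same ground-state substitution $u=f\varphi$, the same integration-by-parts/Cauchy--Schwarz bound for the $\dot\mu$-term, the same use of the eigenvalue equation $-\Delta_D^\omega f-\beta_0^2\partial_\alpha^2 f=Ef$ to control the $\partial_\alpha^2 f/f$-term, and the same final lower bound in which only the maximum of the two ratios enters. The sole (inessential) difference is the concluding step: you finish via coercivity of the form on $\mathcal{H}^1_0(\omega)$ and the compact embedding into $L^2(\omega)$, while the paper deduces the operator inequality $H(s)\ge\bigl(1-\max\{\tfrac{2\mu(s)}{\alpha_{\mu,\beta_0}^2\beta_0},\tfrac{|\dot\mu(s)|}{2\alpha_{\mu,\beta_0}^2\beta_0^2}\}\bigr)h_{\beta_0}-\mathrm{const}$ and invokes the minimax principle together with the discreteness of $\sigma(h_{\beta_0})$.
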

\begin{proof}
We shall estimate $H(s)$ from below by an operator the spectrum of which is purely discrete, then the claim will follow by the minimax principle. Using once more the fact that the ground-state eigenfunction is strictly positive, we can represent any $\varphi\in \mathcal{H}^2(\omega) \cap \mathcal{H}^1_0(\omega)$ as $\varphi(t)=f(t)\,\phi(t),\:t\in\omega$. Using an integration by parts in the angular variable in combination with Cauchy inequality we get
\begin{eqnarray}
\lefteqn{\nonumber
\left|\int_\omega \frac{\partial_\alpha f}{f}\, |\varphi|^2\,\mathrm{d}t\right|=
\left|\int_\omega f(\partial_\alpha f)|\phi|^2\,\mathrm{d}t\right|=\frac{1}{2}\left|\int_\omega f^2(\partial_\alpha\overline{\phi}\,\phi+\partial_\alpha\phi\,\overline{\phi})
\,\mathrm{d}t\right|}
\\[.3em] &&
\nonumber
\le\left(\int_\omega f^2|\partial_\alpha\phi|^2\,\mathrm{d}t\right)^{1/2}\left(\int_\omega f^2|\phi|^2\,\mathrm{d}t\right)^{1/2}
\\[.3em] &&
\label{perturb1}
\le\frac{1}{2}\int_\omega f^2|\partial_\alpha\phi|^2\mathrm{d}t+\frac{1}{2}\int_\omega|\varphi|^2\,\mathrm{d}s\,\mathrm{d}t\,. \phantom{AAAAAAAAAAAAAA}
\end{eqnarray}
Next we use the fact that the function $f$ is ground state-eigenfunction of operator $h_\beta$. Combining another integration by parts with Cauchy inequality and denoting here for the sake of brevity $(x,y)=(x_1,x_2)$ we derive
\begin{eqnarray}
\lefteqn{\nonumber
\beta_0^2\int_\omega\frac{\partial^2_\alpha f}{f}|\varphi|^2\,\mathrm{d}t=-\beta_0^2\int_\omega(\partial_\alpha^2 f)f|\phi|^2\,\mathrm{d}t
=\int_\omega(Ef^2|\phi|^2\,\mathrm{d}t+(\Delta_D^\omega f)f|\phi|^2)\,\mathrm{d}t}
\\[.3em] &&
\nonumber
=E\int_\omega f^2|\phi|^2\,\mathrm{d}t+\int_\omega(\Delta_D^\omega)f|\phi|^2\,\mathrm{d}t=E\int_\omega |\varphi|^2\,\mathrm{d}t+\int_\omega(\Delta_D^\omega)f|\psi|^2\,\mathrm{d}t
\\[.3em] &&
\nonumber
=E\int_\omega|\varphi|^2\,\mathrm{d}t
+\sum_{j=1}^2 \int_\omega\frac{\partial^2f}{\partial x^2_j}\,f|\phi|^2\,\mathrm{d}t
=E\int_\omega|\varphi|^2\,\mathrm{d}t+ \sum_{j=1}^2\int_\omega f|\phi|^2\frac{\mathrm{d}}{\mathrm{d}x_j}\left(\frac{\partial f}{\partial x_j}\right)\,\mathrm{d}y
\\[.3em] &&
\nonumber
=E\int_\omega|\varphi|^2\,\mathrm{d}t- \sum_{j=1}^2 \int_\omega\left(\frac{\partial f}{\partial x_j}\right)^2|\phi|^2\,\mathrm{d}t- \sum_{j=1}^2 \int_\omega\frac{\partial f}{\partial x_j}f\left(\frac{\partial\phi}{\partial x_j}\overline{\phi}+\frac{\partial\overline{\phi}}{\partial x_j}\phi\right)\,\mathrm{d}t
\\[.3em] &&
\nonumber
\le E\int_\omega|\varphi|^2\,\mathrm{d}t-\int_\omega|\nabla'f|^2|\phi|^2\,\mathrm{d}t+2 \sum_{j=1}^2\int_\omega f|\phi|\left|\frac{\partial f}{\partial x_j}\right|\left|\frac{\partial\phi}{\partial x_j}\right|\,\mathrm{d}t
\\[.3em] &&
\nonumber
\le E\int_\omega|\varphi|^2\,\mathrm{d}t-\int_\omega|\nabla'f|^2|\phi|^2 \,\mathrm{d}t+2 \sum_{j=1}^2\sqrt{\int_\omega f^2\left|\frac{\partial\phi}{\partial x_j}\right|^2\,\mathrm{d}t\int_\omega|\phi|^2\left(\frac{\partial f}{\partial x_j}\right)^2\,\mathrm{d}t}
\\[.3em] &&
\nonumber
\le E\int_\omega|\varphi|^2\,\mathrm{d}t-\int_\omega|\nabla'f|^2|\phi|^2\,\mathrm{d}t+ \sum_{j=1}^2\int_\omega f^2\left|\frac{\partial\phi}{\partial x_j}\right|^2\,\mathrm{d}t+\sum_{j=1}^2\int_\omega|\phi|^2\left|\frac{\partial f}{\partial x_j}\right|^2\,\mathrm{d}t
\\[.3em] &&
\label{eq2.2}
=E\int_\omega|\varphi|^2\,\mathrm{d}t+\int_\omega f^2|\nabla'\phi|^2\,\mathrm{d}t\,.
\end{eqnarray}
Armed with these estimates, we are going to find the indicated lower bound to $H(s)$. To this aim we first rewrite the corresponding quadratic form as
\begin{eqnarray}
\lefteqn{\nonumber
\left\langle H(s)\varphi,\varphi\right\rangle=\int_\omega|\nabla'\phi|^2f^2\,\mathrm{d}t
+\beta_0^2\int_\omega f^2|\partial_\alpha\phi|^2\,\mathrm{d}t-\frac{1}{\alpha_{\mu,\beta_0}^2}\int_\omega
\dot{\mu}\frac{\partial_\alpha f}{f}|\varphi|^2\,\mathrm{d}t}
\\[.3em] &&
\label{0}
\qquad +\frac{1}{\alpha_{\mu,\beta_0}^2}\int_\omega\mu(2\beta_0-\mu)\frac{\partial_\alpha^2f}
{f}|\varphi|^2\,\mathrm{d}t \phantom{AAAAAAAAAAAAAAAAA}
\end{eqnarray}
for any function $\varphi=f\phi\in \mathcal{H}^2(\omega)\cap\mathcal{H}_0^1(\omega)$. In order to establish this relation we have to check that
\begin{equation}
\label{1}
\int_\omega\left(-\Delta_D^\omega\varphi-\beta_0^2\partial_\alpha^2\varphi
-E\varphi\right)\overline{\varphi}\,\mathrm{d}t=\int_\omega|\nabla'\phi|^2f^2\,\mathrm{d}t
+\beta_0^2\int_\omega f^2|\partial_\alpha\phi|^2\,\mathrm{d}t.
\end{equation}
First we integrate by parts
\begin{eqnarray*}
\lefteqn{\int_\omega f(x,y)\overline{\phi(x,y)}\frac{\partial f}{\partial x}(x,y)\frac{\partial\phi}{\partial x}(x,y)\,\mathrm{d}x\,\mathrm{d}y
=-\int_\omega\left(\frac{\partial f}{\partial x}\right)^2(x,y)|\phi(x,y)|^2\,\mathrm{d}x\,\mathrm{d}y}
\\[.3em] &&
\hspace{-1.2em}-\int_\omega f(x,y)\phi(x,y)\frac{\partial\overline{\phi}}{\partial x}(x,y)
\frac{\partial f}{\partial x}(x,y)\,\mathrm{d}x\,\mathrm{d}y
-\int_\omega f(x,y)|\phi(x,y)|^2\frac{\partial^2 f}{\partial x^2}(x,y)\,\mathrm{d}x\,\mathrm{d}y\,,
\end{eqnarray*}
and the same for the variable $y$ which yields
\begin{equation}
\label{3}
\int_\omega f\overline{\phi}\,\frac{\partial f}{\partial x_j}\,\frac{\partial\phi}{\partial x_j}\,\mathrm{d}t +
\int_\omega f\phi\,\frac{\partial f}{\partial x_j}\,\frac{\partial\overline{\phi}}{\partial x_j}\,\mathrm{d}t
=-\int_\omega\left(\frac{\partial f}{\partial x_j}\right)^2|\phi|^2\,\mathrm{d}t -\int_\omega f
|\phi\,|^2\frac{\partial^2f}{\partial x_j^2}\,\mathrm{d}t
\end{equation}
for $j=1,2$. Using polar coordinates $t=(r,\alpha)$ one can check is a similar way that
\begin{eqnarray}
\lefteqn{\nonumber
\int_\omega f\overline{\phi}\,(\partial_\alpha\phi)\,(\partial_\alpha f)\,\mathrm{d}t
+\int_\omega f\phi\,(\partial_\alpha\overline{\phi})\,(\partial_\alpha f)(x,y)\,\mathrm{d}t}
\\[.3em] &&
\label{4}
=-\int_\omega(\partial_\alpha f)^2|\phi|^2\,\mathrm{d}t -\int_\omega f|\phi|^2(\partial_\alpha^2f)\,\mathrm{d}t\,.
\end{eqnarray}
These identities make it possible to infer that
\begin{eqnarray}
\lefteqn{ \nonumber
\int_\omega\left(-\Delta_D^\omega\varphi-\beta_0^2\partial_\alpha^2\varphi
-E\varphi\right)\overline{\varphi}\,\mathrm{d}t}
\\[.3em] &&
\nonumber
= \sum_{j=1}^2 \int_\omega\left|\frac{\partial f}{\partial x_j}\phi+\frac{\partial\phi}{\partial x_j}f\right|^2\,\mathrm{d}t
+\beta_0^2\int_\omega\left|f(\partial_\alpha\phi)+
\phi(\partial_\alpha f)\right|^2\,\mathrm{d}t-E\int_\omega f^2|\phi|^2\,\mathrm{d}t
\\[.3em] &&
\nonumber
= \sum_{j=1}^2 \bigg( \int_\omega\left(\frac{\partial f}{\partial x_j}\right)^2|\phi|^2\,\mathrm{d}t
+\int_\omega\bigg|\frac{\partial\phi}{\partial x_j}\bigg|^2f^2\,\mathrm{d}t
+\int_\omega f\overline{\phi}\frac{\partial f}{\partial x_j}\frac{\partial\phi}{\partial x_j}\,\mathrm{d}t
\\[.3em] &&
\nonumber
\quad +\int_\omega f\phi\frac{\partial f}{\partial x_j}\frac{\partial\overline{\phi}}{\partial x_j}\,\mathrm{d}t \bigg)
+\beta_0^2\bigg(\int_\omega f^2|\partial_\alpha\phi|^2\,\mathrm{d}t+\int_\omega|\phi|^2(\partial_\alpha f)^2\,\mathrm{d}t\\[.3em] &&
\nonumber
\quad +\int_\omega f\overline{\phi}(\partial_\alpha\phi)(\partial_\alpha f)\,\mathrm{d}t
+\int_\omega f\phi(\partial_\alpha\overline{\phi})(\partial_\alpha f)\,\mathrm{d}t\bigg)
-E\int_\omega f^2|\phi|^2\,\mathrm{d}t
\end{eqnarray}
Using then the identities (\ref{3}) and (\ref{4}) we rewrite the last expression as
\begin{eqnarray*}
\lefteqn{\sum_{j=1}^2 \left(\int_\omega f^2\left|\frac{\partial\phi}{\partial x_j}\right|^2\,\mathrm{d}t+\int_\omega|\phi|^2\left(\frac{\partial f}{\partial x_j}\right)^2
\,\mathrm{d}t-\int_\omega|\phi|^2\left(\frac{\partial f}{\partial x_j}\right)^2\,\mathrm{d}t-\int_\omega f|\phi|^2\frac{\partial^2f}
{\partial x_j^2}\,\mathrm{d}t \right)}
\\[.3em] &&
+\beta_0^2 \left(\int_\omega f^2|\partial_\alpha\phi|^2\,\mathrm{d}t+\int_\omega|\phi|^2(\partial_\alpha f)^2\,\mathrm{d}t
-\int_\omega|\phi|^2(\partial_\alpha f)^2\,\mathrm{d}t-\int_\omega f|\phi|^2(\partial_\alpha^2f)\,\mathrm{d}t \right)
\\[.3em] &&
\quad -E\int_\omega f^2|\phi|^2\,\mathrm{d}t
\\[.3em] &&
=\int_\omega\Big(f^2|\nabla'\phi|^2-f|\phi|^2\Delta_D^\omega f+\beta_0^2f^2|\partial_\alpha\phi|^2-\beta_0^2f|\phi|^2(\partial_\alpha^2f)-Ef^2|\varphi|^2\Big)\,\mathrm{d}t\,,
\end{eqnarray*}
and since $-\Delta_D^\omega f-\beta_0^2\partial_\alpha^2f-Ef=0$ holds by assumption, we arrive at the relation (\ref{1}). Using now the estimates (\ref{perturb1}) and (\ref{eq2.2}) we infer from (\ref{0}) that
\begin{eqnarray*}
\lefteqn{\left\langle H(s)\varphi,\varphi\right\rangle\ge\int_\omega|\nabla'\phi|^2f^2\,\mathrm{d}t+\beta_0^2\int_\omega f^2|\partial_\alpha\phi|
^2\,\mathrm{d}t-\frac{|\dot{\mu}(s)|}{2\alpha_{\mu,\beta_0}^2}\left(\int_\omega f^2|\partial_\alpha\phi|^2\,\mathrm{d}t
+\int_\omega|\varphi|^2\,\mathrm{d}t\right)}
\\[.3em] &&
\quad -\frac{\mu(s)(2\beta_0-\mu(s))}{\alpha_{\mu,\beta_0}^2\beta_0^2}\left(E\int_\omega|\varphi|^2\,\mathrm{d}t+\int_\omega f^2|\nabla'\phi|^2\,
\mathrm{d}t\right)
\\[.3em] &&
\ge\left(1-\frac{\mu(s)(2\beta_0-\mu(s))}{\alpha_{\mu,\beta_0}^2\beta_0^2}\right)\int_\omega f^2|\nabla'\phi|^2\,\mathrm{d}t
+\left(\beta_0^2-\beta_0^2\frac{|\dot{\mu}(s)|}{2\alpha_{\mu,\beta_0}^2\beta_0^2}\right) \int_\omega f^2|\partial_\alpha\phi|^2\,\mathrm{d}t
\\[.3em] &&
\quad -\frac{E\mu(s)(2\beta_0-\mu(s))}{\alpha_{\mu,\beta_0}^2\beta_0^2}\int_\omega|\varphi|^2\,\mathrm{d}t-\frac{|\dot{\mu}(s)|}
{2\alpha_{\mu,\beta_0}^2} \int_\omega |\varphi|^2\,\mathrm{d}t
\\[.3em] &&
\ge\left(1-\frac{2\mu(s)}{\alpha_{\mu,\beta_0}^2\beta_0}\right)\int_\omega f^2|\nabla'\phi|^2\,\mathrm{d}t+ \beta_0^2\left(1-
\frac{|\dot{\mu}(s)|}{2\alpha_{\mu,\beta_0}^2\beta_0^2}\right) \int_\omega f^2|\partial_\alpha\phi|^2\,\mathrm{d}t
\\[.3em] &&
\quad -\frac{2E\mu(s)}{\alpha_{\mu,\beta_0}^2\beta_0}\int_\omega|\varphi|^2\,\mathrm{d}t
-\frac{|\dot{\mu}(s)|}{2\alpha_{\mu,\beta_0}^2}
\int_\omega|\varphi|^2\,\mathrm{d}t
\\[.3em] &&
\ge\left(1-\max\left\{\frac{2\mu(s)}{\alpha_{\mu,\beta_0}^2\beta_0},\,\frac{|\dot{\mu}(s)|}{2\alpha_{\mu,\beta_0}^2\beta_0^2}\right\}\right)
\int_\omega\left(f^2|\nabla'\phi|^2\,\mathrm{d}t+\beta_0^2\int_\omega f^2|\partial_\alpha\phi|^2\,\mathrm{d}t\right)
\\[.3em] &&
\quad -\frac{2E\mu(s)}{\alpha_{\mu,\beta_0}^2\beta_0}\int_\omega|\varphi|^2\,\mathrm{d}t-\frac{|\dot{\mu}(s)|}{2\alpha_{\mu,\beta_0}^2}
\int_\omega|\varphi|^2\,\mathrm{d}t\,.
\end{eqnarray*}
Using now (\ref{1}) we arrive at the operator inequality
$$
H(s)\ge\left(1-\max\left\{\frac{2\mu(s)}{\alpha_{\mu,\beta_0}^2\beta_0},\,\frac{|\dot{\mu}(s)|}
{2\alpha_{\mu,\beta_0}^2\beta_0^2}\right\}\right)h_{\beta_0}-\frac{2E\mu(s)}{\alpha_{\mu,\beta_0}^2\beta_0} -\frac{|\dot{\mu}(s)|}{2\alpha_{\mu,\beta_0}^2}
$$
from which the sought discreteness of $\sigma(H(s))$ readily follows.
\end{proof}

\section{Dependence on the cross section} \label{s: example}
\setcounter{equation}{0}

We conclude the paper by a couple of examples illustrating how the obtained bound depends on the cross section shape.

\subsection{First example: an elliptic disc} \label{ss: 1st}

Given a positive number $\varepsilon$ we assume that $\omega_\varepsilon$ is an elliptic disc the boundary of which is described by the relation  $(1+\varepsilon)^2x^2+y^2=1$. We are going to show that for sufficiently small values of the parameter $\varepsilon$ and a gentle perturbation $\mu(\cdot)$ the negative-spectrum moments of the operator $H_\beta-E$
on the corresponding twisted tube (\ref{Tube}) behave asymptotically as
$$
\mathcal{O}\left(\varepsilon^{\sigma+1/2}\right)\int_{-s_0}^{s_0}\big(|\dot{\mu}(s)|+\mu(s)
(2\beta_0-\mu(s))\big)^{\sigma+1/2}\,\mathrm{d}s\,.
$$
To prove this claim, let us estimate the right-hand side of (\ref{Inequality}) in this case. First we check that the negative spectrum of operator $H(s)$ is nonempty under the condition $\|\mu\|_\infty<2\beta_0$. Using the fact that $-\Delta_D^\omega f-\beta_0^2 \partial_\alpha^2f-Ef=0$ holds by assumption, that the function $f$ is strictly positive on $\omega_\varepsilon$ according to \cite{EK05}, and the relation $\int_{\omega_\varepsilon}(\partial_\alpha f)f\,\mathrm{d}t=0$, we find
\begin{eqnarray*}
\lefteqn{(f,H(s)f)_{L^2(\omega_\varepsilon)}=\frac{\mu(s)(2\beta_0-\mu(s))}{\alpha_{\mu,\beta_0}^2}\int_{\omega_\varepsilon}
(\partial_\alpha^2f) f\,\mathrm{d}t} \\[.3em] && =-\frac{\mu(s)(2\beta_0-\mu(s))}{\alpha_{\mu,\beta_0}^2}\int_{\omega_\varepsilon}(\partial_\alpha f)^2 \,\mathrm{d}t<0\,, \phantom{AAAAAAAA}
\end{eqnarray*}
hence the claim follows by minimax principle. On the other hand, the lower bound to $H(s)$ proved in the previous section under the conditions (\ref{sp}) means that the negative spectrum consists of a finite number of negative eigenvalues, the multiplicity taken into account. Moreover, their number has an upper bound independent of $\varepsilon$, and in fact, $H(s)$ has for small $\varepsilon$ a single negative eigenvalue as we are going to check next.

We use \emph{reduction ad absurdum} assuming that there is more than one negative eigenvalue. Let $\{\lambda_k(s)\}_{k=1}^M$ with $M>1$ be the negative spectrum of $H(s)$ and $\{g_k(s)\}_{k=1}^M$ the family of corresponding eigenfunctions. It is easy to see that
\begin{eqnarray}
\label{lambda}
\lefteqn{\lambda_k(s)=\int_{\omega_\varepsilon}((h_{\beta_0}-E)g_k(s))\overline{g}_k(s)\,\mathrm{d}t}
\\[.3em] &&
-\frac{1}{\alpha_{\mu,\beta_0}^2}
\int_{\omega_\varepsilon}\left(\dot{\mu}(s)\frac{(\partial_\alpha f)|g_k(s)|^2}{f}-\mu(s)(2\beta_0-\mu(s))\frac{(\partial_\alpha^2f)|g_k(s)|^2}{f}\right)\,\mathrm{d}t\,; \nonumber
\end{eqnarray}
we shall estimate the second term on the right-hand side. We write the operator $h_{\beta_0}$ more explicitly as
$$
h_{\beta_0} =-\frac{\partial^2}{\partial x^2}-
\frac{\partial^2}{\partial y^2} -\beta_0^2\left(x^2\frac{\partial^2}{\partial y^2}
-x\frac{\partial}{\partial x}-2xy\frac{\partial^2}{\partial x\partial y}-
y\frac{\partial}{\partial y}+y^2\frac{\partial^2}{\partial x^2}\right)\,.
$$
and pass to the coordinates $x=\frac{\xi}{1+\varepsilon},\,y=\eta$ which allows us to replace the operator $h_{\beta_0}$ by a unitarily equivalent operator on $L^2\left(\omega_0,\frac{1}{1+\varepsilon}\,\mathrm{d} \xi\,\mathrm{d}\eta\right)$ acting as
\begin{equation}
\label{ellipse}
\widetilde{h}_{\beta_0}=-\Delta_D^{\omega_0}-\beta_0^2\partial_\alpha^2+\varepsilon\left(-(2+\varepsilon)(\beta_0^2\eta^2+1)
\frac{\partial^2}{\partial\xi^2}+\frac{2+\varepsilon}{(1+\varepsilon)^2}\beta_0^2\xi^2\frac{\partial^2}{\partial\eta^2}\right)\,.
\end{equation}
Using the fact that the principal eigenfunction $f$ of $h_{\beta_0}$ on $L^2(\omega_\varepsilon)$ equals
$$
f(x,y)=\widetilde{f}\left((1+\varepsilon)x,y\right),\quad(x,y)\in\omega_\varepsilon\,,
$$
where  $\widetilde{f}$ is the ground-state eigenfunction of $\widetilde{h}_{\beta_0}$ on $L^2\left(\omega_0, \frac{1}{1+\varepsilon}\, \mathrm{d}\xi\,\mathrm{d}\eta\right)$ and applying the first-order perturbation theory \cite{K66} one gets that for small values of the parameter $\varepsilon$
\begin{equation}
\label{perturb_derivative}
f(x,y)=f_\mathrm{disc}\left((1+\varepsilon)x,y\right)+
\mathcal{O}(\varepsilon)\,,
\end{equation}
where $f_\mathrm{disc}$ is the ground-state eigenfunction of operator $-\Delta_D^{\omega_0}-\beta_0^2\partial_\alpha
^2$ on $L^2(\omega_0)$, which coincides naturally with the ground-state eigenfunction of the Dirichlet Laplacian on the unit circle.

Let us next show show that there is a constant $C>0$ such that the bound
\begin{equation}
\label{estimate-perturbation}
\frac{|g_k(s)|}{f}\le C\,, \quad k=1,\ldots,M\,,
\end{equation}
holds on the cross-section $\omega_\varepsilon$ uniformly in $s$. Passing to polar coordinates and denoting
$g^*(r,\varphi):=g(r\cos\varphi,r\sin\varphi)$ for any $k=1,\ldots,M$ and similarly for $f$, and using the Dirichlet condition at the boundary of $\omega_\varepsilon$, one gets
\begin{eqnarray*}
\lefteqn{\frac{|g_k(s)(x,y)|}{f(x,y)}=\frac{|g^*_k(s)(r,\varphi)|}{f^*(r,\varphi)}=\frac{|g^*_k(s)(r,\varphi)-g^*_k(s)(r(\varphi),
\varphi)|}{f^*(r,\varphi)-f^*(r(\varphi),\varphi)}} \\[.3em] &&
\hspace{3.6em} =-\frac{|(\partial g^*_k(s)/\partial r)(r_0(\varphi),\varphi)|}{(\partial f^*/\partial r)(r_1(\varphi),\varphi)}\,, \phantom{AAAAAAAAAA}
\end{eqnarray*}
where the curve $\varphi\mapsto r(\varphi)$ with $\varphi\in[0,2\pi)$ is the boundary of the ellipse $\omega_\varepsilon$ and $r_0(\varphi),r_1(\varphi)\in[0,r(\varphi)]$. Note that the left-hand side of (\ref{estimate-perturbation}) is well defined and continuous on the precompact set $\mathrm{supp}\,\mu \times \omega_\varepsilon$, hence we have to care only about the behavior in the vicinity of the boundary. We have $\left|\frac{\partial f_\mathrm{disc}}{\partial r}\right|>\alpha>0$ near the boundary of the unit circle and using the asymptotics (\ref{perturb_derivative}) for points near the boundary of the ellipse $\omega_\varepsilon$ we get
$$
\frac{|g_k(s)(x,y)|}{f(x,y)}\le\frac{1}{\alpha+\mathcal{O}(\varepsilon)}\left\|\frac{\partial
g^*_k(s)}{\partial r}\right\|_{L^\infty(\omega_\varepsilon)}\,,
$$
which yields the bound (\ref{estimate-perturbation}) valid uniformly is $s$.

Working out the perturbation theory in (\ref{perturb_derivative}) to higher orders and using the fact that the angular derivative of the function $f_\mathrm{disc}$ is zero one can check that
\begin{equation}
\label{&&&}
\partial_\alpha f=\mathcal{O}(\varepsilon)\,,\quad\partial_\alpha^2f=\mathcal{O}(\varepsilon)\,,
\end{equation}
and combining this asymptotics with the inequality (\ref{estimate-perturbation}) we find that the second term on the right-hand side of (\ref{lambda}) behaves as $\mathcal{O}(\varepsilon)$ in the limit $\varepsilon\to 0$. Since the first term is positive and $\lambda_k(s)<0$ holds for $k=1,\ldots,M$ by assumption, it follows
\begin{equation}
\label{h}
\int_{\omega_\varepsilon}\left(\left(h_{\beta_0}-E\right)g_k(s)\right)\overline{g_k}(s)\,\mathrm{d}t=\mathcal{O}(\varepsilon)\,.
\end{equation}
Now we expand the functions $g_k(s),\:k=1,\ldots,M$, in the orthonormal basis $\{f^m\}_{m=1}^\infty$ corresponding to the eigenvalues $\{E_m\}_{m=1}^\infty$ of the operator $h_{\beta_0}-E$ on $L^2(\omega_\varepsilon)$, arranged in the ascending order, i.e.
$$
g_k(s)=\sum_{m=1}^\infty c_{k,m}(s)f^{m}\,.
$$
The principal eigenvalue $E$ of the operator $h_{\beta_0}$ is simple. Indeed, assume that there are two eigenfunctions $q_1$ and $q_2$ such that $h_{\beta_0}q_j=Eq_j$ holds for $j=1,2$. By the first order of perturbation theory and (\ref{ellipse}) we then have
$$
\label{asymptotic behavior}
q_j(x,y)=f_\mathrm{disc}((1+\varepsilon)x,y)+\mathcal{O}(\varepsilon)\,, \quad j=1,2\,,
$$
which, however, contradicts to the orthonormality of the functions $q_1$ and $q_2$. Inserting now the Fourier expansion into  (\ref{h}) we get
$$
\int_{\omega_\varepsilon}\left(\sum_{m=2}^\infty c_{k,m}(s)E_mf^m\right)\overline{\left(\sum_{l=1}^\infty c_{k,l}(s)f^l\right)}\,\mathrm{d}t
=\mathcal{O}(\varepsilon)\,,
$$
and consequently, $\sum_{m=2}^\infty E_m|c_{k,m}(s)|^2=\mathcal{O}(\varepsilon)$, and since $E_2>0$, we have
$$
\sum_{m=2}^\infty|c_{k,m}(s)|^2=\mathcal{O}(\varepsilon)\,,
$$
which means that
$$
\int_{\omega_\varepsilon}\left|g_k(s)-c_{k,1}(s)f\right|^2\,\mathrm{d}t=\mathcal{O}(\varepsilon)
$$
as $\varepsilon\to 0$. This in turn implies
$$
g_k(s)=c_{k,1}(s)f+\mathcal{O}(\varepsilon)\,,\quad k=1,\ldots,M\,,
$$
a.e. in $\omega_\varepsilon$. The orthogonality of the functions $g_k(s)$ then requires
$$
c_{k,1}(s)c_{l,1}(s)=\mathcal{O}(\varepsilon)\quad\text{if}\;\; k\ne l\,,
$$
which contradicts to the fact that $c_{k,1}(s)=1+\mathcal{O}(\varepsilon)$ coming from the relation  $\int_{\omega_\varepsilon}|g_k(s)|^2\,\mathrm{d}t=1$. Consequently, the operator $H(s)$ has a single negative eigenvalue for all $\varepsilon>0$ small enough.

Now it is sufficient to combine relations (\ref{lambda}), (\ref{estimate-perturbation}), and (\ref{&&&}) with the fact that $h_{\beta_0}-E\ge0$ to arrive at the asymptotics
$$
\lambda_1(s)=\mathcal{O}(\varepsilon)\left(|\dot{\mu}(s)|+\mu(s)(2\beta_0-\mu(s))\right)
$$
for $\varepsilon\to 0$ which proves the sought assertion.

\subsection{Second example: a multiply folded ribbon}

Next we are going to illustrate the opposite effect, namely that choosing the cross section far from the circular shape we can make for a fixed twist perturbation $\mu$ the right-hand side of the estimate (\ref{Inequality}) arbitrarily large. To this aim we shall construct an appropriate sequence of thesets $\omega_k,\:k=1,2,\ldots$.

We begin with the circles $T_1$ and $T_2$ centered at zero of radii one and two, respectively. We cross them by axes and quadrant axes, i.e. the lines $y=\pm x$, and connect the intersection points by a closed piecewise linear zigzag curve denoted as $\Gamma_\mathrm{out}^{(1)}$. Next we fix a small positive number $\varepsilon$ and construct in a similar way the curve $\Gamma_\mathrm{in}^{(1)}$ with vertices on the circles of radii $1-\varepsilon$ and $2-\varepsilon$. The two curves are the inner and outer boundary of the set which we denote as $\omega_1$ -- cf. Fig.~1.
\begin{figure}[htb]
\begin{center}
\includegraphics[width=1in,keepaspectratio,angle=0,trim= 0 50 0 0]{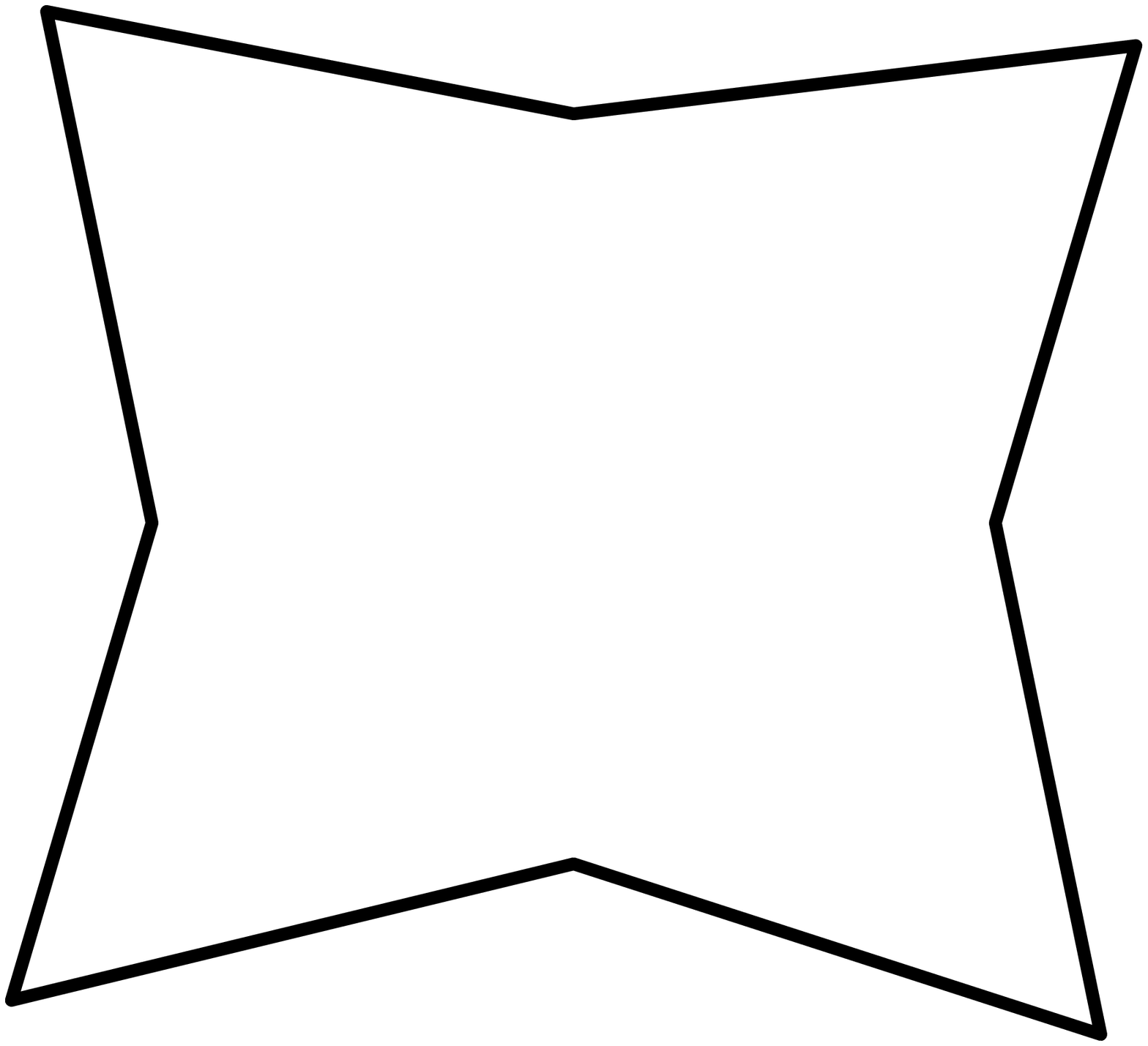}
\hspace{5em}
\includegraphics[width=1in,angle=0,]{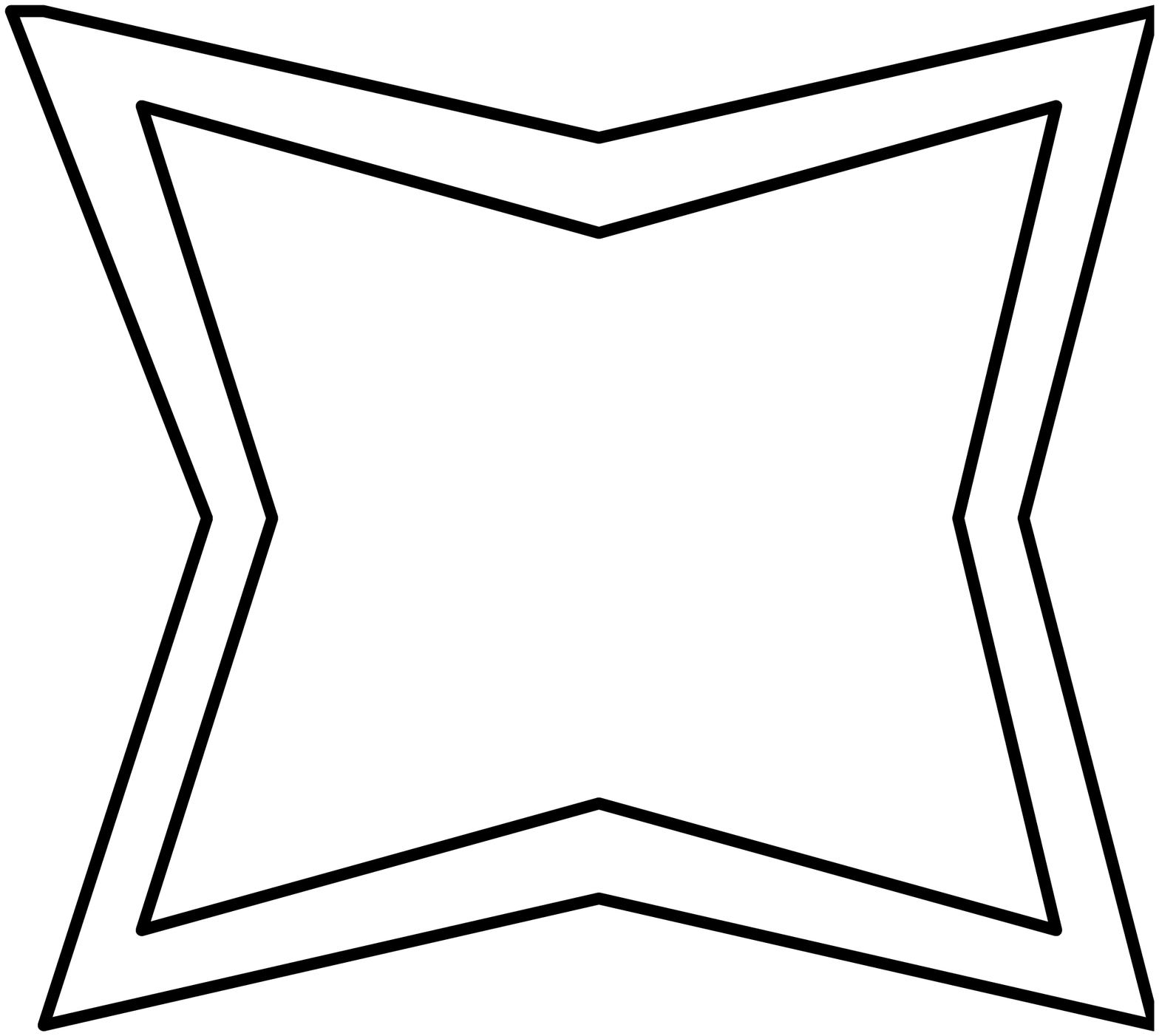}
\caption{The curve $\Gamma_\mathrm{in}^{(1)}$ and the set $\omega_1$.}
\end{center}
\end{figure}
Next we construct the set $\omega_2$ bounded by the curves $\Gamma_\mathrm{out}^{(2)}$ and $\Gamma_\mathrm{in}^{(2)}$ obtained in a similar way by cutting the plane into sixteen radial segments -- cf. Fig.~2.
\begin{figure}[htb]
\begin{center}
\includegraphics[height=1.5in,width=1in,angle=0]{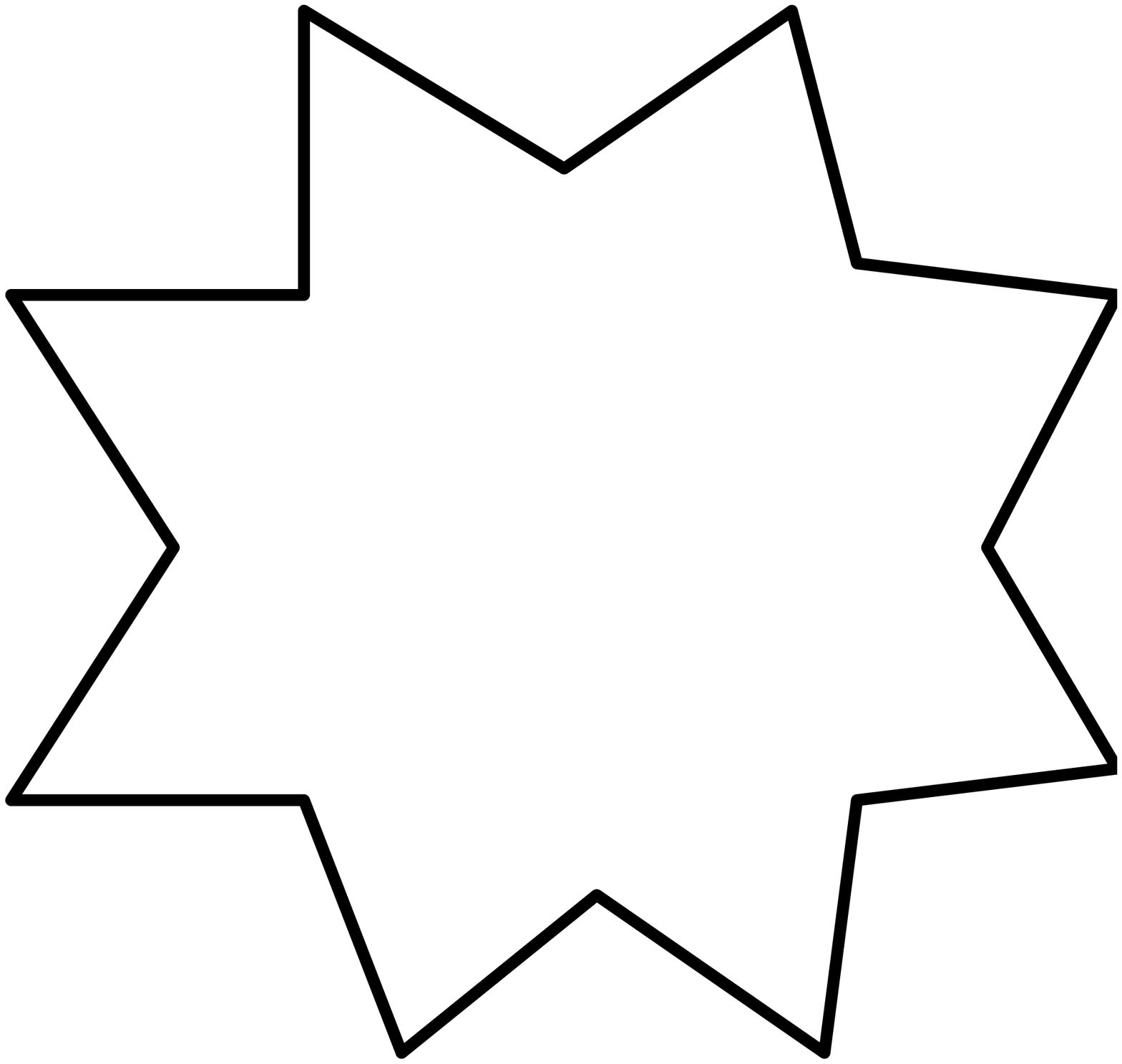}
\caption{The curve $\Gamma_\mathrm{out}^{(2)}$.}
\end{center}
\end{figure}
The construction proceeds in the same way: in the $k$-th step we obtain a `zigzag ribbon' loop with $2^{k+1}$ outer vertices.

Our aim is now to show that if such a `maccaroni' tube is twisted with a fixed slowdown perturbation $\mu$, the right-hand side of the spectral estimate (\ref{Inequality}) will be bound from below by
\begin{equation}
\label{thorny  bound}
\frac{1}{\alpha_{\mu,\beta_0}}\left(\frac{4^{k+1}}{\pi^2}\right)^{\sigma+1/2}\int_{\mathbb{R}}
\mu(s)^{\sigma+1/2}(2\beta_0-\mu(s))^{\sigma+1/2}\,\mathrm{d}s.
\end{equation}
To demonstrate this claim we note first that in view of minimax principle the ground state eigenvalue of $H(s)$ equals
$$
\lambda_{1,k}(s)=\inf\left\{ \int_{\omega_k}\overline{g}\,H(s)g\,\mathrm{d}t\,:\; g\in \mathcal{H}_0^1(\omega_k)\cap\mathcal{H}^2(\omega_k),\,\|g\|_{L^2(\omega_k)}=1 \right\},
$$
which means that
$$
\lambda_{1,k}(s)\le-\frac{1}{\alpha_{\mu,\beta_0}^2}\int_{\omega_k}\left(\dot{\mu}\,(\partial_\alpha
f_k)f_k-\mu(2\beta_0-\mu)(\partial_\alpha^2f_k)f_k\right)\,\mathrm{d}t\,,
$$
where $f_k$ is the ground state eigenfunction of $h_{\beta_0}$ on $L^2(\omega_k)$. Using the fact that $\int_{\omega_k}(\partial_\alpha f_k)f_k\,\mathrm{d}t=0$ and an integration by parts similar to the argument used in the previous example we find that
\begin{equation}\label{eigenvalue}
\lambda_{1,k}(s)\le-\frac{\mu(s)(2\beta_0-\mu(s))}{\alpha_{\mu,\beta_0}^2}\int_{\omega_k}(\partial_\alpha f_k)^2\,\mathrm{d}t\,;
\end{equation}
we are going to estimate the last integral from below.

To this aim we consider separately the parts of $\omega_k$ divided by the four circles used in the construction, specifically
$$\omega_k\cap\{1-\varepsilon\le r<1\}\,,\quad \omega_k\cap
\{1<r<2-\varepsilon\}\,,\quad\text{and}\quad\omega_k\cap\{2-\varepsilon<r\le2\}\,,
$$
and moreover, in view of the discrete rotational symmetry we can discuss only one segment of it. Consider therefore first the part $\widetilde{\omega}$ of $\omega_k\cap\{1\le r<2-\varepsilon\}$
bounded by the circles $T_1$ and $T_{2-\varepsilon}$ and the radial lines
$\varphi=0$ and $\varphi=\frac{\pi}{2^{k+1}}$, and estimate the
integral $\int_{\widetilde{\omega}}(\partial_\alpha f_k)^2\,\mathrm{d}t$ from below. The polar coordinates of the points of $\widetilde{\omega}$ are $r\in(1,2-\varepsilon)$ and $\varphi(r)\in(\varphi_1(r), \varphi_2(r))$ with appropriate $\varphi_j(r)\in\left(0,\frac{\pi} {2^{k+1}}\right)$, hence we get
$$
\int_{\widetilde{\omega}}f_k^2\,\mathrm{d}t=\int_1^{2-\varepsilon}\,\mathrm{d}r
\int_{\varphi_1(r)}^{\varphi_2(r)}r\,f_k^2(r,\alpha)\,\mathrm{d}\alpha\,.
$$
Writing $f_k^2(r,\cdot)$ as a primitive function of the radial derivative and using the Dirichlet condition which $f_k$ has to satisfy at $\partial \omega_k$ we rewrite this expression as
$$
\int_{\widetilde{\omega}}f_k^2\,\mathrm{d}t=\int_1^{2-\varepsilon}\,r\,\mathrm{d}r
\int_{\varphi_1(r)}^{\varphi_2(r)}\left(\int_{\varphi_1(r)}^\alpha
\frac{\partial f_k}{\partial\theta}\,\mathrm{d}\theta\right)^2\,\mathrm{d}\alpha\,,
$$
hence using Cauchy inequality we may infer that
\begin{eqnarray*}
\lefteqn{\int_{\widetilde{\omega}}f_k^2\,\mathrm{d}t\le\frac{\pi}{2^{k+1}}
\int_1^{2-\varepsilon}\,\mathrm{d}r\int_{\varphi_1(r)}^{\varphi_2(r)}
\int_{\varphi_1(r)}^\alpha\,r\left(\frac{\partial f_k}{\partial\theta}\right)^2
\,\mathrm{d}\theta\,\mathrm{d}\alpha} \\[.3em] && \le\frac{\pi^2}{4^{k+1}}\int_1^{2-\varepsilon}\, \mathrm{d}r\,\int_{\varphi_1(r)}^{\varphi_2(r)}\,r\left(\frac{\partial f_k}
{\partial\theta}\right)^2\,\mathrm{d}\theta =\frac{\pi^2}{4^{k+1}}
\int_{\widetilde{\omega}}(\partial_\alpha f_k)^2\,\mathrm{d}t\,,
\end{eqnarray*}
which implies
$$
\int_{\widetilde{\omega}}(\partial_\alpha f_k)^2\,
\mathrm{d}t>\frac{4^{k+1}}{\pi^2}\int_{\widetilde{\omega}}f_k^2\,
\mathrm{d}t\,.
$$
In a similar way one estimate from below contributions from the other parts of the set $\omega_k$, thus combining (\ref{eigenvalue}) with the normalization condition $\int_{\omega_k} f_k^2\,\mathrm{d}t=1$ we get
$$
|\lambda_{1,k}(s)|>
\frac{4^{k+1}}{\pi^2}\frac{\mu(s)(2\beta_0-\mu(s))}{\alpha_{\mu,\beta_0}^2},\,
s\in\mathbb{R}\,,
$$
and consequently, the validity of estimate (\ref{thorny  bound}) for the right-hand side of (\ref{Inequality}).


\bigskip

\begin{flushleft}

Pavel Exner and Diana Barseghyan

\smallskip

Doppler Institute for Mathematical Physics and Applied
Mathematics \\ B\v{r}ehov\'{a} 7, 11519 Prague \\ and  Nuclear
Physics Institute ASCR \\ 25068 \v{R}e\v{z} near Prague, Czechia

\smallskip

Email: exner@ujf.cas.cz, dianabar@ujf.cas.cz

\end{flushleft}

\end{document}